\theoremstyle{plain}
\newtheorem{theorem}{Theorem}[section]
\newtheorem{lemma}[theorem]{Lemma}
\newtheorem{proposition}[theorem]{Proposition}
\theoremstyle{definition}
\newtheorem{definition}[theorem]{Definition}
\theoremstyle{remark}
\title{Self-Dual codes  from $(-1,1)$-matrices of skew symmetric type}
\author{
Jos\'e Andr\'es Armario
\and
María Dolores Frau
}
\date{}
\address{\small \rm  Depto Matemática Aplicada I\\
 Universidad de Sevilla\\ Avda. Reina Mercedes s/n 41012 Sevilla\\ Spain.}
\email{  armario@us.es}
\email{mdfrau @us.es}
\begin{document}

\begin{abstract}
Previously, self-dual codes have  been constructed from weighing matrices, and in particular from conference matrices (skew and symmetric). In this paper, codes constructed from  matrices of skew symmetric type whose determinants reach the Ehlich-Wojtas' bound are presented. A necessary and sufficient condition for these codes to be self-dual is given, and examples are provided for lengths up to 52.
\end{abstract}
\keywords{Self-dual Codes \and Weighing matrices \and $(-1,1)$-matrices of skew type \and Maximal determinants}

\maketitle

A linear $[N,K]$ code $C$ over $GF(p)$ is a $K$-dimensional vector subspace of $GF(p)^{N}$, where $GF(p)$ is the Galois field with $p$ elements, $p$ prime. The elements of $C$ are called codewords and the weight  $wt(x)$ of a codeword $x$ is the number of nonzero coordinates in $x$. The minimum weight (or minimum distance for linear codes) of $C$ is defined as $d(C)=\mbox{min}\,\{wt(x)\colon x\in C\,\mbox{and}\, x\neq 0\}$. The error-correction capability of a code has to do with its minimum distance. In this sense,  codes with  large minimum distance  among $[N,K]$ codes are the most desirable.
 A matrix whose rows generate the code $C$ is  called a generator matrix of $C$. The dual code $C^\perp$ of $C$ is defined as
 $C^\perp=\{x\in GF(p)^N\colon x\cdot y=0 \,\mbox{for all} \,y\in C\}$. $C$ is {\it self-dual} if $C=C^\perp$. We say that a self-dual code of distance $d$ is {\it optimal} if there is no other self-dual code of that length with a greater minimum weight.

 In \cite{GH99} a technique for constructing self-dual codes using weighing matrices is presented. In \cite{AG01}, focusing on skew weighing matrices,  an  improved method for constructing self-dual codes is given. Among  weighing matrices, conference matrices are important because they provide a large number of self-dual codes with the potential for high minimum distance. It is well known that if $n$ is the order of a conference matrix  then it has to be even. For $n= 2\mod 4$ conference matrices are (equivalent to) symmetric. Otherwise, they are skew symmetric.

In this paper, working with $n= 2\mod 4$ a method for constructing self-dual codes from $n$ by $n$ $(-1,1)$-matrices $K$ of skew type with the maximum determinant is presented. Let us point out that $K-I_n$ is a skew symmetric matrix with $0$'s on the main diagonal and $\pm 1$ elsewhere ($I_n$ denotes the identity matrix of order $n$). Furthermore,  by Theorem \ref{theorembound}  there is no other skew matrix with a greater determinant than $K-I_n$. These two properties are used in our construction. In this sense,  it follows the same philosophy of the method for skew conference matrices (Proposition \ref{sdcsw}).

\noindent{\bf Notation.} Throughout this paper we use $-$ for $-1$ and $1$ for $+1$. The notation $(0,1)$-matrix  means a matrix whose entries are either $0$ or  $1$. We use $J_n$ for the  matrix all whose entries are equal to one of order $n$ and $M^T$ for the transpose of $M$. Whenever a determinant is mentioned in this paper, we mean its absolute value.

\section{Two known construction methods}

In this section we recall two techniques for constructing self-dual codes using weighing matrices (see \cite{AG01,GH99}).

\subsection{Weighing and conference matrices}
Firstly, we need some preliminary notions and results.

A {\it weighing matrix} $W(n,k)$ of order $n$ and weight $k$ is a  square $(0,1,-1)$-matrix of size $n$ such that $W\cdot W^T=k I_n,\,k\leq n$. Obviously, the determinant of $W(n,k)$ is $k^{n/2}$. A weighing matrix $W(n,n)$ is called a {\it Hadamard matrix}, and must have order 1, 2, or a multiple of 4. A weighing matrix $W(n,n-1)$ is called a {\it conference matrix}.

A matrix $W$ is {\it symmetric} if $W=W^T$. A matrix is {\it skew-symmetric} (or {\it skew}) if $W=-W^T$.

\begin{theorem}\cite[p.497]{CD96}
If $W=W(n,n-1)$, then either $n=0 \mod  4$ and $W$ is equivalent to a skew matrix, or $n=2 \mod  4$ and $W$ is equivalent to a symmetric matrix and such a matrix cannot exist unless $n-1$ is the sum of two squares: thus they cannot exist for orders $22,34,58,70,78,94.$ The first values for which the existence of symmetric conference matrices is unknown are $n=66,86$.
\end{theorem}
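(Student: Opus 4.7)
The plan is to split the statement into two independent pieces: the combinatorial dichotomy (skew vs.\ symmetric), and the number-theoretic obstruction on $n-1$.

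\emph{Dichotomy.} First I would normalize $W$ by signing rows and columns so that its first row and first column both read $(0,1,\ldots,1)$. Writing $W=\bigl(\begin{smallmatrix} 0 & \mathbf{1}^T\\ \mathbf{1} & W'\end{smallmatrix}\bigr)$, the identity $WW^T=(n-1)I_n$ is equivalent to $W'(W')^T = (n-1)I_{n-1}-J_{n-1}$, with every row and column of $W'$ summing to zero and distinct rows of $W'$ having pairwise inner product $-1$. Counting the $(-1)$-entries of each row modulo $4$ separates the two cases: when $n\equiv 2\pmod 4$ these parity constraints force $W'_{ij}=W'_{ji}$, so $W$ is symmetric; when $n\equiv 0\pmod 4$ the same count gives $W'_{ij}=-W'_{ji}$ after a global sign flip of the first column, and $W$ becomes equivalent to a skew matrix.

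\emph{Obstruction for $n\equiv 2\pmod 4$.} Once $W$ is symmetric we have $W^2=(n-1)I_n$, and the substitution $x=Wy$ exhibits a rational equivalence
\[
\sum_{i=1}^n x_i^2 \;\sim\; (n-1)\sum_{i=1}^n y_i^2
\]
of positive-definite rational quadratic forms. Both sides have the same dimension and (since $n$ is even) the same discriminant modulo squares, so by Hasse--Minkowski the equivalence reduces to equality of Hilbert symbols at every place. Using the identity $(a,a)_p=(a,-1)_p$ and the fact that $\binom{n}{2}$ is odd precisely when $n\equiv 2\pmod 4$, the condition collapses to
\[
(-1,n-1)_p = 1 \quad\text{for every prime } p,
\]
which by Fermat's two-square theorem is equivalent to $n-1$ being a sum of two integer squares.

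\emph{Numerical check and main obstacle.} The listed exclusions then follow by direct factorisation: $n-1\in\{21,33,57,69,77,93\}$ equals $3\cdot 7$, $3\cdot 11$, $3\cdot 19$, $3\cdot 23$, $7\cdot 11$, $3\cdot 31$, each carrying a prime $\equiv 3\pmod 4$ to odd multiplicity; whereas $65=1^2+8^2$ and $85=2^2+9^2$ show that the arithmetic obstruction is silent at orders $66$ and $86$. The combinatorial dichotomy is essentially bookkeeping; the genuine difficulty lies in the Hasse--Minkowski computation, which supplies the bridge between the matrix identity $W^2=(n-1)I_n$ and the arithmetic of $n-1$, and whose Hilbert-symbol handling at the prime $2$ is where one has to be most careful.
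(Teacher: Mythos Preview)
The paper does not prove this theorem at all: it is quoted verbatim from \cite{CD96} as background, with no argument supplied. There is therefore nothing in the paper to compare your proposal against.

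On its own merits your outline is sound, and in fact follows the classical route. For the dichotomy, your normalisation and the identity $W'(W')^{T}=(n-1)I_{n-1}-J_{n-1}$ are correct, but the phrase ``counting the $(-1)$-entries of each row modulo $4$'' is too compressed to stand as a proof: what one actually needs is to show that the sign $\varepsilon_{ij}:=W'_{ij}W'_{ji}$ is independent of $i,j$, which is usually done via a three-index argument using the orthogonality of both rows and columns, and then to identify that common sign with $(-1)^{(n-2)/2}$. Your sketch points at the right mechanism but does not yet contain it.

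The Hasse--Minkowski step is entirely correct. With $n\equiv 2\pmod 4$ one has $\binom{n}{2}$ odd, so the Hasse invariant of $\langle n-1,\dots,n-1\rangle$ is $(n-1,-1)_p$; since $n-1\equiv 1\pmod 4$ the symbol at $p=2$ is $+1$, and at $p=\infty$ it is $+1$ because $n-1>0$, leaving exactly the condition that every prime $p\equiv 3\pmod 4$ divide $n-1$ to even order. This is the standard ``quadratic-form'' proof of the obstruction; an alternative, more elementary argument (reduce $W$ modulo such a prime $p$, use $W^{2}\equiv 0$ and the nonexistence of $\sqrt{-1}$ in $\mathbf F_{p}$ to force $p^{2}\mid n-1$) reaches the same conclusion without invoking local--global machinery, but yours is perfectly valid. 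The numerical checks for $n\in\{22,34,58,70,78,94\}$ and for $65,85$ are right.
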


For more details and construction of weighing matrices the reader can refer the book by Geramita and Seberry \cite{GS79}.

Two important properties of the weighing matrices, which follow directly from the definition are:
\begin{enumerate}
\item Every row and column of a $W(n,n-k)$ contains exactly $k$ zeros;
\item Every two distinct rows and columns of a $W(n,n-k)$ are orthogonal to each other, which means that their inner product is zero.
\end{enumerate}

For the determinant of skew symmetric matrices we have
\begin{lemma}\cite{How80}
\begin{enumerate}
\item If $n$ is odd and $A$ is a skew-symmetric matrix with real elements the $det (A)=0$.
\item If $n$ is even and $A$ a skew-symmetric matrix with real elements then $det(A)$ is $PF(A)^2$, where $PF(A)$ is the Pfaffian of $A$ a polynomial in the entries of $A$.
\end{enumerate}
\end{lemma}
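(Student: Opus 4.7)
The proof naturally splits along the parity of $n$, and I would handle the two parts by quite different arguments.

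For part (1), the plan is a one-line determinant calculation. Since $A$ is skew-symmetric, $A = -A^T$, and so $\det(A) = \det(-A^T) = (-1)^n \det(A^T) = (-1)^n \det(A)$. When $n$ is odd this reads $\det(A) = -\det(A)$, which over the reals forces $\det(A) = 0$. No further work is needed here.

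For part (2), I would first \emph{define} the Pfaffian combinatorially as
$$\text{Pf}(A) \;=\; \frac{1}{2^{m} m!} \sum_{\sigma \in S_n} \text{sgn}(\sigma) \prod_{i=1}^{m} a_{\sigma(2i-1),\,\sigma(2i)}, \qquad n = 2m,$$
which is manifestly a polynomial in the entries $a_{ij}$ of $A$ (alternatively, as a signed sum over perfect matchings of $\{1,\dots,n\}$). The target is the identity $\det(A) = \text{Pf}(A)^2$. The strategy is to transport both sides to a canonical skew-symmetric form. To that end I would first establish the \emph{transformation rule}
$$\text{Pf}(P^T A P) \;=\; \det(P)\,\text{Pf}(A)$$
for every $n \times n$ real matrix $P$. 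Granted this rule, one invokes the classical fact that each real skew-symmetric $A$ is congruent via an invertible (indeed orthogonal) $P$ to a block-diagonal matrix
$$B \;=\; \text{diag}\!\left(\begin{pmatrix}0 & \lambda_1\\ -\lambda_1 & 0\end{pmatrix},\ \dots,\ \begin{pmatrix}0 & \lambda_m\\ -\lambda_m & 0\end{pmatrix}\right),$$
for which a direct calculation yields $\det(B) = \prod_i \lambda_i^2$ and $\text{Pf}(B) = \prod_i \lambda_i$, so $\det(B) = \text{Pf}(B)^2$. Combining with the transformation rule and $\det(P^T A P) = \det(P)^2 \det(A)$, both sides of the desired identity pick up the same factor $\det(P)^2$, and the identity on $B$ transfers to $A$.

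The main obstacle is the transformation rule $\text{Pf}(P^T A P) = \det(P)\,\text{Pf}(A)$; the parity argument in (1) and the canonical-form computation are routine, but this rule requires a careful combinatorial manipulation of the defining sum (or, equivalently, multilinearity in the columns of $P$ combined with the alternating property inherited from $\text{sgn}(\sigma)$ and skew-symmetry of $A$). One clean route is to observe that both $\text{Pf}(P^T A P)$ and $\det(P)\,\text{Pf}(A)$ are polynomials of the same multi-degree in the entries of $P$ that agree on the dense open set where $P$ is a permutation matrix times a diagonal (a direct verification), and then invoke Zariski density; this sidesteps the bulk of the combinatorics and makes the reduction to the canonical form $B$ the final step.
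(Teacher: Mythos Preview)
The paper does not prove this lemma at all; it is simply quoted from the reference \cite{How80} and used as background. So there is no argument in the paper to compare your proposal against, and any correct proof would be acceptable here.

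Your treatment of part (1) is the standard one and is fine. For part (2), the overall architecture (define $\mathrm{Pf}$, prove the transformation rule $\mathrm{Pf}(P^TAP)=\det(P)\,\mathrm{Pf}(A)$, reduce to the block-diagonal canonical form) is the classical route and is sound. However, the specific shortcut you propose for the transformation rule has a genuine gap: the set of matrices of the form ``permutation times diagonal'' is a finite union of $n$-dimensional linear subspaces inside the $n^2$-dimensional space of all $n\times n$ matrices, so it is neither open nor Zariski-dense. Agreement of two polynomials on that set does \emph{not} force them to coincide, and the density appeal fails as stated.

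The usual repair is to check the transformation rule directly for elementary matrices (transpositions, diagonal scalings, and elementary shears $I+\lambda E_{ij}$) and then use that these generate $GL_n(\mathbb{R})$, together with the tautological multiplicativity $\mathrm{Pf}\big((P_1P_2)^TA(P_1P_2)\big)=\mathrm{Pf}\big(P_2^T(P_1^TAP_1)P_2\big)$. Each elementary case is a short computation from your combinatorial definition. Once the rule is established on $GL_n$, the reduction to the canonical form $B$ and the verification $\det(B)=\mathrm{Pf}(B)^2$ proceed exactly as you describe.
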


Two matrices $M$ and $N$ are said to be {\it Hadamard equivalent} or {\it equivalent} if one can be obtained from the other by a sequence of the operations:
\begin{itemize}
\item interchange any pairs of rows and/or columns;
\item multiply any rows and/or columns through by $-1$.
\end{itemize}
In other words, we say that $M$ and $N$ are equivalent if there exist $(0,1,-1)$-monomial matrices $P$ and $Q$ such that $PMQ^T=N$.
\subsection{Self-dual codes using weighing matrices}
The following simple but powerful methods was introduced in \cite{AG01,GH99} and they have been extensively used in searching for optimal self-dual codes \cite{KS09}.

\newpage

\begin{proposition}\cite{GH99}\label{proposition1GH99}
Let $W(n,k)$ be a weighing matrix of order $n$ and weight $k$. Let $\alpha$ be a nonzero element of $GF(p)$ such that
$$\alpha^2 + k = 0 \mod p.$$
Then the matrix $$G=[\alpha I,\, W(n,k)]$$
generates a self-dual $[2n,n]$ code $C(n,k)$ over $GF(p)$.
\end{proposition}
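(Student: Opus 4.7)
The proof plan has two moving parts: verify that the rows of $G$ span an $n$-dimensional subspace, and verify that this subspace is self-orthogonal. Together these force self-duality for length $2n$, since a self-orthogonal code $C \subseteq C^\perp$ with $\dim C = n$ and ambient dimension $2n$ automatically satisfies $\dim C^\perp = 2n - n = n$, hence $C = C^\perp$.

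First I would check the dimension. The generator matrix $G = [\alpha I_n,\, W(n,k)]$ is an $n \times 2n$ matrix whose left $n \times n$ block is $\alpha I_n$. Since $\alpha \neq 0$ in $GF(p)$, this block is invertible, so the rows of $G$ are linearly independent and $\dim C = n$.

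Next, the self-orthogonality. The key computation is
\begin{equation*}
G G^T = [\alpha I_n,\, W]\,[\alpha I_n,\, W]^T = \alpha^2 I_n + W W^T.
\end{equation*}
By the defining property of a weighing matrix recalled earlier, $W W^T = k I_n$, so $G G^T = (\alpha^2 + k)\, I_n$. The hypothesis $\alpha^2 + k \equiv 0 \pmod p$ then gives $G G^T \equiv 0 \pmod p$. This says exactly that every pair of rows of $G$ (including each row with itself) has zero inner product over $GF(p)$. Consequently every codeword, being a $GF(p)$-linear combination of rows, is orthogonal to every row and hence to every codeword, so $C \subseteq C^\perp$.

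Combining the two steps, $C$ is a self-orthogonal $[2n,n]$ code over $GF(p)$, which forces $C = C^\perp$ by the dimension count above. There is no genuine obstacle here: the argument is a one-line matrix identity together with a standard dimension observation. The only subtlety worth highlighting in the write-up is the need that $\alpha \neq 0$ (to get full rank) and that all arithmetic is performed in $GF(p)$, where the congruence $\alpha^2 + k \equiv 0$ becomes an equality.
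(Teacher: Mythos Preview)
Your argument is correct and is precisely the standard one: the left block $\alpha I_n$ forces full row rank, and $GG^T=(\alpha^2+k)I_n\equiv 0\pmod p$ gives self-orthogonality, whence self-duality by the dimension count. The paper itself does not supply a proof of this proposition; it simply cites \cite{GH99}, so there is nothing further to compare.
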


The  method above relies on the orthogonality  of  rows and columns of $W(n,k)$.
Self-dual codes can also be constructed from skew weighing matrices using the following method where the skew-symmetric property of the matrices is used as well.

\begin{proposition}\cite{AG01}\label{sdcsw}
Let $W(n,k)$ be a skew weighing matrix of order $k$. Let $\alpha$ and $\beta$ be nonzero elements of $GF(p)$ such that
$$\alpha^2 +\beta^2 + k = 0 \mod p.$$
Then the matrix $$G=[\alpha I, \, \beta I + W(n,k)]$$ generates a self-dual code $C^*(n,k)$ over $GF(p)$.

\end{proposition}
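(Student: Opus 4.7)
The plan is to verify the two standard conditions for self-duality: (i) that the generator matrix $G$ has rank $n$, so that the code $C^*(n,k) \subseteq GF(p)^{2n}$ has dimension exactly $n$ (half the length), and (ii) that $G G^T \equiv 0 \pmod{p}$, which forces every pair of rows (including each row with itself) to be orthogonal, i.e.\ $C^*(n,k) \subseteq C^*(n,k)^\perp$. Combined, these imply $C^*(n,k) = C^*(n,k)^\perp$.

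For (i), note that the left block of $G$ is $\alpha I_n$ with $\alpha \neq 0$ in $GF(p)$, so the $n$ rows of $G$ are already linearly independent; hence $\dim C^*(n,k) = n$.

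For (ii), I would expand directly:
\begin{equation*}
G G^T \;=\; (\alpha I)(\alpha I)^T + (\beta I + W)(\beta I + W)^T
     \;=\; \alpha^2 I + \beta^2 I + \beta(W + W^T) + W W^T.
\end{equation*}
Now I use the two hypotheses on $W = W(n,k)$ in turn. First, the skew-symmetry $W^T = -W$ gives $W + W^T = 0$, which kills the cross terms — this is precisely where the skew hypothesis enters and is what allows the extra free parameter $\beta$ compared with Proposition \ref{proposition1GH99}. Second, the weighing-matrix identity $W W^T = k I_n$ turns the last term into $k I$. Assembling,
\begin{equation*}
G G^T \;=\; (\alpha^2 + \beta^2 + k)\, I_n \;\equiv\; 0 \pmod p,
\end{equation*}
by the hypothesis $\alpha^2 + \beta^2 + k \equiv 0 \pmod p$.

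There is essentially no hard step: the proof is a direct block-matrix computation, and the only mild point worth being careful about is that $W$ here must truly be skew (not just skew-type, not just weighing) so that $W + W^T = 0$ holds exactly; this is why the hypothesis singles out skew weighing matrices. The dimension count and the self-orthogonality together yield the conclusion that $G$ generates a self-dual $[2n, n]$ code over $GF(p)$.
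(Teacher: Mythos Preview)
Your argument is correct and is exactly the standard verification: $\alpha\neq 0$ gives full rank, and the block computation $GG^T=(\alpha^2+\beta^2+k)I_n$ using $W+W^T=0$ and $WW^T=kI_n$ yields self-orthogonality. The paper itself does not supply a proof of this proposition---it is quoted from \cite{AG01}---so there is nothing further to compare against.
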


Conference matrices are important because they provide a large number of self-dual codes with the potential for high minimum distances. We can find a good explanation for this, in the fact that conference matrices provide the maximal determinant amount the family of matrices with $0$'s on the main diagonal and $\pm 1$'s elsewhere.

Although the technique described in Proposition \ref{proposition1GH99} works for any weighing matrix (skew or not), let us point out  that this method is a particular  case ($\beta=0$) of the one given in Proposition \ref{sdcsw} when we are dealing with  skew weighing matrices.

For small $n$ and $p$, values of $\alpha$ providing self-dual codes with symmetric conference matrices and the distance of these code were given by Arasu and Gulliver in \cite{AG01}.  Table \ref{tabla1} summarized some of them.
In this table and the one that follows, the column headed $d_b$ gives the known bounds on $d$ for a self-dual code from \cite{Gab2013}.
Rows with no values (or those which do not appear for $2<p<23$) indicate that a self-dual code cannot be constructed. The gaps in this table motivate the investigation of self-dual codes constructed from $(-1,1)$-matrices of skew type. These are given in the next section.

\begin{center}\begin{table}
\caption{Self-dual codes from symmetric conference matrices of orders $n=6$ and $14$.}
\label{tabla1}$$\begin{array}{c|c|c|c} \hline  & N=12 &  N=28  \\
\hline \begin{array}{c} p \\
\hline 3 \\ 5 \\ 7 \\ 11 \\ 13 \\ 17 \\ 19 \\ 23
\end{array}
 &
 \begin{array}{ccc}\quad\alpha \quad & d & \quad d_b\quad \\
\hline   1 & 6 & 6 \\
 &  & \\
3 & 6 & 6\\
 &  &  \\
 &  & \\
 &  &  \\
 &  & \\
8 & 6 & 7
\end{array}
&
\begin{array}{ccc}\quad\alpha \quad & d & \quad d_b\quad \\
\hline    &  & \\
 &   & \\
1 & 10 & \quad 11\,-\,13 \\
3 & 10 & \quad 10\,-\,14 \\
&  &  \\
2 & 10 & \quad 10\,-\,15\\
5 & 10 & \quad 10\,-\,15\\
 &  &
\end{array}
\\
 \hline \end{array}$$\end{table}\end{center}

\section{New construction method}

The codes generated using weighing matrices $W(n,k)$  for $k=n$  and $k=n-1$ (Hadamard and conference matrices) are often optimal or near  optimal. For smaller values of $k$, the distance of these codes become poorer and poorer.

For $n= 2 \mod 4$ no Hadamard matrix  exists, so the codes generated using conference matrices usually are the best option but  these matrices cannot be skew. So we must use  Proposition \ref{proposition1GH99} (Proposition \ref{sdcsw} is not possible) to generate these codes.
 Under this panorama, we have displayed a technique for constructing self-dual codes using skew matrices with the maximum possible determinant: $(2n-3)(n-3)^{\frac{1}{2}n-1}$.

Let us observe that $\det(W(n,k))=k^{n/2}$. The following result provides a relationship between  the determinant of weighing matrices and the maximum possible determinant for skew matrices whose off-diagonal elements are all $\pm 1$.
\begin{lemma}
Let $n\ge 4$ be an integer. Then,
$$k^{n/2}\leq  (2n-3)(n-3)^{\frac{1}{2}n-1} \leq (n-1)^{n/2},\,\,\, 1\leq k\leq n-2.$$
\end{lemma}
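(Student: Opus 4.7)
The plan is to prove the two inequalities separately. Since $k^{n/2}$ is increasing in $k$, the left inequality reduces to the case $k=n-2$, so the task becomes
\[
(\mathrm{I})\ \ (n-2)^{n/2}\leq (2n-3)(n-3)^{n/2-1},\qquad (\mathrm{II})\ \ (2n-3)(n-3)^{n/2-1}\leq (n-1)^{n/2}.
\]
I would work throughout with the shift $m=n-3\geq 1$, which puts both inequalities in a cleaner form.

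For (II), dividing by $(n-3)^{n/2-1}$ turns it into $2m+3\leq (m+2)(1+2/m)^{(m+1)/2}$. The exponent $(m+1)/2$ is at least $1$, so Bernoulli's inequality gives $(1+2/m)^{(m+1)/2}\geq 1+(m+1)/m=(2m+1)/m$, and (II) follows from the trivial bound $m(2m+3)\leq (m+2)(2m+1)$, i.e.\ $2m+2\geq 0$.

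The substantive work is (I), where Bernoulli points the wrong way. The ratio form reads $(m+1)(1+1/m)^{(m+1)/2}\leq 2m+3$; after squaring I would split $(1+1/m)^{m+1}=(1+1/m)^{m}(1+1/m)<e\cdot (m+1)/m$, using the classical bound $(1+1/m)^{m}<e$. This reduces (I) to the polynomial inequality $e(m+1)^{3}\leq m(2m+3)^{2}$, which on expansion becomes
\[
(4-e)m^{3}+(12-3e)m^{2}+(9-3e)m-e\geq 0.
\]
Because $e<3$, all three non-constant coefficients are positive; direct evaluation gives $25-8e>0$ at $m=1$, and the derivative of the left side is positive on $[0,\infty)$, so the inequality holds for every $m\geq 1$. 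The main obstacle is finding an upper estimate for $(1+1/m)^{(m+1)/2}$ tight enough at small $m$: the naive bound $(1+x)^{r}\leq e^{rx}$ is already too weak at $m=1$, and the device of isolating the factor $(1+1/m)$ before invoking $(1+1/m)^{m}<e$ is exactly what preserves the slack needed to close the polynomial estimate.
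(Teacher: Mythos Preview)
Your argument is correct. The paper, however, states this lemma without proof, so there is no authors' argument to compare against; your reduction via $m=n-3$, the Bernoulli estimate for~(II), and the device of peeling off one factor $(1+1/m)$ before invoking $(1+1/m)^{m}<e$ so that~(I) collapses to the cubic $(4-e)m^{3}+(12-3e)m^{2}+(9-3e)m-e\ge 0$ on $m\ge 1$ all go through as written.
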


\subsection{Ehlich-Wotjas' bound for the determinant of skew matrices}

 Let $g(n)$  denote the maximum  determinant of all   $n\times n$  matrices with elements $\pm 1$. We ignore here the trivial cases $n=1,2$. Hadamard gave the bound $n^{n/2}$ for $g(n)$. This bound can be attained only if $n$ is a multiple of $4$. A matrix that attains it is called an {\it Hadamard matrix}, and it is an outstanding conjecture that one exists for any multiple of $4$. If $n$ is not a multiple of $4$, $g(n)$ is not in general known, but tighter bounds exist. For $n=2 \mod 4$, Ehlich and independently Wojtas proved that
 $$g(n)\leq (2n-2)(n-2)^{\frac{1}{2}n-1}\quad \mbox{(Ehlich-Wojtas' bound)}.$$

There is a companion theory for matrices with $0$'s on the main diagonal and $\pm 1$ elsewhere.
Let $f(n)$ denote the maximum determinant of all $n\times n$ matrices   with elements $0$ on the main diagonal and $\pm 1$ elsewhere.
It is well-known that $f(n)\leq (n-1)^{n/2}$. This can be attained only when $n$ is even, and a matrix which does so is called a {\it conference matrix}.
For $n= 2\mod 4$ there is a necessary condition: $n-1$ must be a sum of two squares. When $n$ is multiple of 4 that no condition is needed and $M$ is a  conference matrix if and only if $M+I$ is a  Hadamard matrix (it is called skew Hadadamard matrix). We shall here be concerned with the case $n= 2\mod 4$, $n\neq 2$ and this will be implicitly assumed in what follows.

Let $f_k(n)$ denote the maximum determinant of all $n\times n$ skew matrices with elements $0$ on the main diagonal and $\pm 1$ elsewhere. Let $M$ denote a matrix of this kind. Then,

\newpage

\begin{theorem}\label{theorembound}
We have
\begin{equation}\label{e-wbound01}
f_k(n)\leq (2n-3)(n-3)^{\frac{1}{2}n-1}
\end{equation}
 and equality  holds if and only if
there exist a skew matrix $M$ with
\begin{equation}\label{idenecsdc}
M M^T=M^TM=\left[\begin{array}{cc} L & 0 \\ 0 & L \end{array}\right],
\end{equation}
where $L=(n-3)I+2J$.
\end{theorem}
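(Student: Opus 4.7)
The plan is to parallel the classical Ehlich--Wojtas argument that bounds the determinant of $\pm 1$ matrices of order $n \equiv 2 \pmod 4$, adapted to the skew $0/\pm 1$ setting. First I would form $S := M M^T$. Since $M$ is skew-symmetric, $S = -M^2 = M^T M$, so $S$ is symmetric positive semidefinite, the identity $M M^T = M^T M$ appearing in \eqref{idenecsdc} is automatic, and $\det(M)^2 = \det(S)$. The diagonal of $S$ equals $n-1$ because each row of $M$ contains one zero and $n-1$ entries $\pm 1$; for $i \neq j$ the entry $S_{ij} = \sum_{k \neq i, j} M_{ik} M_{jk}$ is a sum of $n-2$ signs $\pm 1$, hence an integer of absolute value at most $n-2$ and of the same parity as $n-2$, which is even because $n \equiv 2 \pmod 4$. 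Bounding $|\det(M)|$ is therefore equivalent to showing $\det(S) \leq (2n-3)^2 (n-3)^{n-2}$ under these constraints.

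To establish this bound I would exploit the fact that the $(n/2) \times (n/2)$ matrix $L = (n-3)I + 2J$ has eigenvalues $2n-3$ (once) and $n-3$ (with multiplicity $n/2 - 1$), so $\mathrm{diag}(L, L)$ has spectrum $\{2n-3,\, 2n-3,\, n-3,\, \ldots,\, n-3\}$ with $n-3$ repeated $n-2$ times, trace $n(n-1)$, and determinant $(2n-3)^2 (n-3)^{n-2}$, meeting the bound. I would then give an Ehlich-type eigenvalue analysis, combining the trace constraint $\mathrm{tr}(S) = n(n-1)$ with the integrality and parity constraints on the $S_{ij}$, to show that no other spectrum achievable by an admissible $S$ produces a larger product of eigenvalues.

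For the equality case, the spectrum $\{2n-3,\, 2n-3,\, n-3,\, \ldots,\, n-3\}$ forces $S - (n-3)I$ to be a rank-$2$ positive semidefinite matrix. Writing $P = (S - (n-3)I)/n$, one obtains a rank-$2$ orthogonal projection with $P_{ii} = 2/n$, and the conditions $n P_{ij} \in \mathbb{Z}$ (of the correct parity) together with $|P_{ij}| \leq 2/n$ (from PSD and Cauchy--Schwarz) force $n P_{ij} \in \{-2, 0, 2\}$. Interpreting $P_{ij} = \langle v_i, v_j \rangle$ with vectors $v_i$ of common norm $\sqrt{2/n}$ in $\mathbb{R}^2$ shows that at most four directions appear, at mutual angles $0, \pi/2, \pi, 3\pi/2$; the constraint $n \equiv 2 \pmod 4$ then eliminates all configurations except the one giving $S = \mathrm{diag}(L, L)$, up to a simultaneous permutation and $\pm 1$-conjugation of rows and columns (which preserves the skewness of $M$).

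The main obstacle is the second step: showing the bound $\det(S) \leq (2n-3)^2 (n-3)^{n-2}$ by genuinely exploiting the discrete constraints. A naive continuous relaxation allowing arbitrary real off-diagonals bounded in magnitude by $n-2$ yields only the weaker Hadamard bound $(n-1)^n > (2n-3)^2 (n-3)^{n-2}$, so the parity of the off-diagonals must be used in an essential way. The standard Ehlich--Wojtas device -- a Fischer-type two-block decomposition combined with a parity/convexity localisation of the extremum -- can be adapted to the diagonal value $n-1$ (instead of $n$), and this adaptation is the technical heart of the proof.
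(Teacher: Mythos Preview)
Your architecture is the same as the paper's: form the Gram matrix $S=MM^T$, note $S_{ii}=n-1$ and that the off-diagonals are even, and then push a Fischer two-block decomposition together with Cohn's lemma (what the paper cites as Lemma~\ref{detfdp}) to get $\det S\le (2n-3)^2(n-3)^{n-2}$. So the plan is right.

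The genuine gap is the step you flag yourself as ``the technical heart.'' Parity mod $2$ is not enough to produce the block decomposition; the paper needs, and proves, a mod $4$ relation on \emph{triples}: for distinct $i,j,t$,
\[
S_{ij}+S_{jt}+S_{ti}\equiv 2 \pmod 4,
\]
obtained by expanding $\sum_k (m_{ik}+m_{jk})(m_{ik}+m_{tk})$ and isolating the three exceptional terms $k\in\{i,j,t\}$ where the diagonal zeros intervene. This is exactly the place where skewness of $M$ is used (the cross terms $m_{ji}m_{ti}$, $m_{ij}m_{tj}$, $m_{it}m_{jt}$ collapse to a quantity $\equiv 1\pmod 4$ only because $m_{uv}=-m_{vu}$). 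Once one off-diagonal entry is $\equiv 0\pmod 4$, this triple relation propagates to split $\{1,\dots,n\}$ into two classes so that within-class entries are $\equiv 2$ and between-class entries are $\equiv 0\pmod 4$; Fischer's inequality then separates the blocks, and Cohn's lemma bounds each block determinant. Your proposal does not mention this mod $4$ step, and without it there is no mechanism to create the two blocks from the mod $2$ information alone.

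Your equality analysis, by contrast, is a different and rather cleaner route than the paper's. The paper reads off the form $S=\mathrm{diag}(\Sigma_1 L\Sigma_1,\Sigma_2 L\Sigma_2)$ directly from the equality conditions in Fischer's inequality and in Cohn's lemma. Your spectral argument---$P=(S-(n-3)I)/n$ is a rank-$2$ projection with $P_{ii}=2/n$, so $|S_{ij}|=n|P_{ij}|\le 2$, forcing $S_{ij}\in\{-2,0,2\}$, and then the $P^2=P$ identity pins down exactly $n/2-1$ nonzero off-diagonals per row---recovers the same block form after a signed permutation, and this conjugation preserves skewness of $M$. That works and is arguably more transparent, but it only kicks in once the bound (and hence the spectrum at equality) is established, so you still need the mod $4$ ingredient above.
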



\begin{proposition}\label{nceb01}
 Equality in (\ref{e-wbound01}) can only hold if  $2n-3=x^2$ where $x$ is an integer.
\end{proposition}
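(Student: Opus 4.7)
The plan is to combine the structural information from Theorem \ref{theorembound} (equality case) with the Pfaffian lemma for skew-symmetric matrices of even order, and then to carry out a parity/squareness argument on the resulting determinant formula.

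First, I would assume equality in (\ref{e-wbound01}). By Theorem \ref{theorembound} there exists a skew matrix $M$ of order $n$ satisfying (\ref{idenecsdc}), with $L=(n-3)I+2J$ of size $n/2$. The next step is to compute $\det(L)$ explicitly. Since $J$ (of order $n/2$) has eigenvalues $n/2$ with multiplicity one and $0$ with multiplicity $n/2-1$, the eigenvalues of $L$ are $2n-3$ (once, on the all-ones vector) and $n-3$ (with multiplicity $n/2-1$), giving
\[
\det(L)=(2n-3)(n-3)^{\frac{1}{2}n-1}.
\]
From (\ref{idenecsdc}) it then follows that
\[
\det(M)^2=\det(MM^T)=\det(L)^2=\bigl[(2n-3)(n-3)^{\frac{1}{2}n-1}\bigr]^2,
\]
hence $|\det(M)|=(2n-3)(n-3)^{\frac{1}{2}n-1}$.

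Now I would apply the lemma on skew-symmetric matrices: since $n$ is even and $M$ has integer entries, $\det(M)=PF(M)^2$, and $PF(M)$ is an integer (the Pfaffian being a polynomial in the entries with integer coefficients). Consequently $(2n-3)(n-3)^{\frac{1}{2}n-1}$ must be a perfect square.

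The final step is the divisibility/parity computation. Because $n\equiv 2\pmod 4$, we may write $n=4m+2$, so that the exponent $\frac{1}{2}n-1=2m$ is even. Thus $(n-3)^{\frac{1}{2}n-1}=\bigl((n-3)^m\bigr)^2$ is already a perfect square. For the product $(2n-3)(n-3)^{2m}$ to be a perfect square, the remaining factor $2n-3$ must itself be a perfect square, say $2n-3=x^2$ with $x\in\mathbb{Z}$ (necessarily odd, as $2n-3$ is odd). I do not foresee a serious obstacle here; the only subtle point is ensuring that the Pfaffian is integral, which is immediate from its combinatorial definition given that $M$ has integer entries.
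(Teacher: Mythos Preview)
Your argument is correct and is genuinely different from the paper's. The paper proceeds structurally: it passes via Theorem \ref{maxdeterdosmundos} to the $(-1,1)$-matrix $H=M+I$ of skew-EW type, invokes Cohn's block decomposition result that $2n-2=x^2+y^2$ where $x$ is the common row and column sum of the upper-left block $X$, and then observes that $X$ is itself of skew type, which forces each row sum plus the corresponding column sum to equal $2$, hence $x=1$ and $2n-3=y^2$. Your route is more elementary and self-contained: it uses only the Pfaffian lemma already quoted in the paper together with the parity observation that for $n\equiv 2\pmod 4$ the exponent $\tfrac{n}{2}-1$ is even, so $(n-3)^{\frac{n}{2}-1}$ is automatically a square and $2n-3$ must absorb the remaining square condition. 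The one step you might state explicitly is the elementary fact that if $a\cdot b^2$ is a perfect square in $\mathbb{Z}$ with $b\neq 0$, then $a$ is a perfect square (immediate from looking at prime exponents); this holds regardless of whether $2n-3$ and $n-3$ are coprime. The paper's approach yields the extra structural information about row sums of the blocks, but for the bare statement of Proposition \ref{nceb01} your Pfaffian argument is shorter and avoids the external reference to \cite{Coh89}.
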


\begin{lemma}\cite{Coh89}\label{detfdp}
Let $h\geq 2$, $A=[a_{i,j}]$ denote an $h$ by $h$ positive-definite symmetric matrix with diagonal elements $a_{i,i}=n-1$ and with $|a_{i,j}|\geq 2$. Then
\begin{enumerate}
\item $det(A) \leq (n+2h-3)(n-3)^{h-1}.$
\item equality holds if and only if $A=\Sigma L \Sigma$ for some suitable $\Sigma$, where $\Sigma$ is a square diagonal matrix with all its diagonal elements $\pm 1$.
    \end{enumerate}
\end{lemma}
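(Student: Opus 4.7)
My plan is to bound $\det(A) = \prod_{i=1}^h \lambda_i$ using only two invariants computable from the hypotheses: the trace $\mathrm{tr}(A) = h(n-1)$ and $\mathrm{tr}(A^2) = \sum_{i,j} a_{ij}^2 \geq h(n-1)^2 + 4h(h-1)$, the latter inequality coming from $|a_{ij}|\geq 2$ off the diagonal and $a_{ii}=n-1$ on it. Since $A$ is positive definite, the eigenvalues $\lambda_1,\ldots,\lambda_h$ are strictly positive.

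I would reduce (1) to the finite-dimensional problem: maximize $\prod \lambda_i$ over positive $\lambda_1,\ldots,\lambda_h$ subject to $\sum \lambda_i = s := h(n-1)$ and $\sum \lambda_i^2 \geq q_0 := h(n-1)^2 + 4h(h-1)$. Since the AM-GM optimum (all $\lambda_i = n-1$) yields $\sum \lambda_i^2 = h(n-1)^2 < q_0$, the second constraint is active at any maximizer. A Lagrange-multiplier calculation then forces each $\lambda_i$ to satisfy a common quadratic $2\nu\lambda^2+\mu\lambda-1=0$, so the eigenvalues take at most two distinct values $a$ (with multiplicity $k$) and $b$ (with multiplicity $h-k$). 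Solving $ka+(h-k)b=s$ and $ka^2+(h-k)b^2=q_0$ yields
\[
a = (n-1) + 2\sqrt{(h-1)(h-k)/k}, \qquad b = (n-1) - 2\sqrt{(h-1)k/(h-k)}.
\]
A direct comparison of $f(k):=a^k b^{h-k}$ over $k\in\{1,\ldots,h-1\}$ would show the maximum occurs at $k=1$, giving $a = n+2h-3$, $b = n-3$, and hence $\det(A)\leq(n+2h-3)(n-3)^{h-1}$.

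For the equality case in (2), both the trace-squared estimate and the optimization must be saturated. The former forces $|a_{ij}|=2$ for all $i\neq j$; the latter pins the spectrum of $A$ to $\{n+2h-3,\,n-3,\ldots,n-3\}$. Writing the spectral decomposition as $A = (n-3)I + 2h\,uu^T$ for a unit eigenvector $u$ of the simple eigenvalue, the diagonal condition $a_{ii}=n-1$ translates into $u_i^2=1/h$, so $u_i=\epsilon_i/\sqrt{h}$ with $\epsilon_i\in\{\pm1\}$. Setting $\Sigma=\mathrm{diag}(\epsilon_1,\ldots,\epsilon_h)$ then gives $A=(n-3)I+2\epsilon\epsilon^T=\Sigma\bigl((n-3)I+2J\bigr)\Sigma=\Sigma L\Sigma$, and the converse direction (every $\Sigma L\Sigma$ satisfies the hypotheses with equality) is immediate by direct substitution.

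The main obstacle is verifying that $k=1$ actually maximizes $f(k)$ among the admissible values $k\in\{1,\ldots,h-1\}$ — i.e., that distributing the ``spread'' demanded by $\sum\lambda_i^2=q_0$ into a single large eigenvalue and $h-1$ equal small ones beats every more balanced partition. I would handle this either by treating $k$ as a continuous variable, differentiating $\log f(k)=k\log a(k)+(h-k)\log b(k)$, and checking that the derivative is negative on $[1,h-1]$; or via a majorization/mass-shuffling argument exploiting strict concavity of $\log$ on $(0,\infty)$, which shows that replacing any two equal eigenvalues $a$ by one slightly larger and one slightly smaller (while preserving $\sum\lambda_i$ and $\sum\lambda_i^2$) strictly increases the product.
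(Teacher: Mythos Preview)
The paper does not actually prove this lemma; it is quoted verbatim from Cohn~\cite{Coh89} and used as a black box in the proof of Theorem~\ref{theorembound}. So there is no in-paper argument to compare against, and your spectral/optimization route stands on its own. Your overall strategy---bounding $\det(A)=\prod\lambda_i$ via the trace and Frobenius constraints, reducing by Lagrange multipliers to two-valued spectra, and handling the equality case through the rank-one decomposition $A=(n-3)I+2h\,uu^T$---is sound, and the equality analysis is clean and correct.

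The real gap is the one you flag, and your proposed ``Method~2'' does not close it. Replacing two equal eigenvalues $a,a$ by $a+\varepsilon,a-\varepsilon$ preserves $\sum\lambda_i$ but \emph{increases} $\sum\lambda_i^2$ by $2\varepsilon^2$, so the move leaves the constraint surface; and even ignoring that, the product $(a+\varepsilon)(a-\varepsilon)=a^2-\varepsilon^2$ \emph{decreases}. Strict concavity of $\log$ works against you here, not for you: it is precisely why spreading two equal values apart lowers the product. Any perturbation argument that respects both constraints must move at least three eigenvalues simultaneously, and since every two-valued configuration is already a Lagrange critical point, a first-order argument cannot distinguish them---you would need a second-order (Hessian) comparison. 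Your Method~1, differentiating $\log f(k)=k\log a(k)+(h-k)\log b(k)$ in the continuous parameter $k$, does lead to a proof, but the derivative is not obviously signed and the computation is not short; you should carry it out explicitly rather than assert it.
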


\noindent{\em Proof of Theorem \ref{theorembound}} $\quad$ This proof can be considered a slight modification of the proof of \cite[Theorem 1]{Coh89}. It boils down to check some specific arguments. We have included it here in order to make the paper as much self-content as possible.

Dealing with the easier part first, it follows that if $MM^T=M^TM$ is of the form stated, then indeed holds, by Lemma \ref{detfdp}.

Now suppose that $M$ is any $n\times n$ skew matrices with elements $0$ on the main diagonal and $\pm 1$ elsewhere of determinant $f_k(n)$.
Let $A=MM^T$. Then $A$ is positive definite and symmetric, $a_{i,i}=n-1$ and for each $i\neq j$, $a_{i,j}=\sum m_{i,k}m_{j,k}= 0 \mod 2$. Thus every element of $A$ is even, and hence vanishes or is numerically at least 2. If no element of $A$ were to vanish, then by Lemma  \ref{detfdp} we should have
$$\begin{array}{rl}
\left(f_k(n)\right)^2= det(A) & \leq \,(3n-3)(n-3)^{n-1}\\
                    & =\, (3n^2-12n+9)(n-3)^{n-2}\\
                    & < \, (4n^2-12n+9)(n-3)^{n-2}\\
                    & = \, (2n-3)^2(n-3)^{n-2},
                    \end{array}$$
and the result would follow.

Otherwise, at least one element of $A$ vanishes and in particular is divisible by $4$. Now for any $i,j,t$ three distinct integers all in the range 1 to $n$ inclusive,

$$\begin{array}{rl}
a_{i,i}+a_{i,j}+a_{j,t}+a_{t,i} & =\,\sum (m_{i,k}^2+m_{i,k}m_{j,k}+m_{j,k}m_{t,k}+m_{t,k}m_{i,k})\\
                    & = \sum (m_{i,k}+m_{j,k})(m_{i,k}+m_{t,k})\\
                    & = 2+ m_{i,j}m_{i,t}-m_{j,t}m_{i,j}+m_{i,t}m_{j,t}\\
                    & \quad+\sum_{k\neq i,j,t} (m_{i,k}+m_{j,k})(m_{i,k}+m_{t,k})\\
                    & =
                    \, 3 \mod 4,        \end{array}$$
and so since $a_{i,i}=n-1= 1 \mod 4$ we find that
\begin{equation}\label{equatiomodular1}
a_{i,j}+a_{j,t}+a_{t,i}=
2 \mod 4.
\end{equation}
Let us  point out that to get the identity in the third line above we have used the property of being skew for $M$. As we have seen for at least one pair $(i,j)$ with $i\neq j$, $a_{i,j}=0\mod 4$. By interchanging rows of $M$ if necessary we can arrange that $a_{1,n}= 0\mod 4$. Suppose that with this done, there are precisely $h$ elements in the first row divisible by 4. Then $1\leq k\leq n-1$ since $a_{1,1}=n-1=1 \mod 4$ and $a_{1,n}= 0 \mod 4$. Then without affecting the first or last rows of $A$, but permuting other rows as necessary, we can arrange the $h$ elements in the first row of $A$ which are divisible by $4$ occupy the last $h$ positions, that is, that $a_{1,i}= 2\mod 4$ for $1\leq i\leq n-k$ and that $a_{1,i}=0\mod 4$ for $n-h+1\leq i\leq n$. Now partition $A$ in the form
$$
A=\left[\begin{array}{cc}
B(n-h,n-h) & C(n-h,h)\\
D(h,n-h) & E(h,h)
\end{array}\right].
$$
It then follows without difficulty from (\ref{equatiomodular1}) that every  off-diagonal element of $B$ and of $E$ is congruent to
2 modulo 4, whereas every element of $C$ and $D$ is divisible by 4. By Fischer's inequality we find that $det (A)\leq det(B)\, det(E)$ with equality if and only if every element of $B$ and of $C=B^T$ vanishes. Hence by Lemma \ref{detfdp},
$$
\begin{array}{rl}
\left(f_k(n)\right)^2 & =\, det (A)\\
                    & \leq \, (n+2(n-h)-3)(n-3)^{n-h-1}(n+2h-3)(n-3)^{h-1}\\
                    & = \,(n-3)^{n-2}\left((2n-3)^2-(n-2h)^2\right)\\
                    & \leq \,(n-3)^{n-2} (2n-2)^2,        \end{array}$$
with equality in the last line if and only if $k=\frac{1}{2}n$. Thus we have proved the inequality of the Theorem, and using Lemma \ref{detfdp}, can have equality only if
$$A=\left[
\begin{array}{cc}
\Sigma_1 L \Sigma_1 & 0 \\
0 & \Sigma_2L \Sigma_2
\end{array}
\right].$$
Now replacing $M$ by $\Sigma M\Sigma$ where
$$\Sigma=\left[
\begin{array}{cc}
\Sigma_1  & 0 \\
0 &  \Sigma_2
\end{array}
\right]$$
changes $M$ to a skew matrix in the form required. Obviously, $M^TM=MM^T$. This concludes the proof.

\vspace{2mm}

\begin{lemma}\label{lemmaconmuta}
Let $M$ be a skew matrix satisfying (\ref{idenecsdc}). Then
$$M X= XM$$
where $X=\left[\begin{array}{cc}J_{n/2} & 0\\
0 & J_{n/2}\end{array}\right]$.
\end{lemma}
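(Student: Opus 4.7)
The plan is to reduce the desired identity to a commutation that is automatic from the hypothesis. The key observation is that the block matrix appearing in (\ref{idenecsdc}) is an affine function of $X$: since $L = (n-3)I_{n/2} + 2J_{n/2}$, one has
$$\left[\begin{array}{cc} L & 0 \\ 0 & L \end{array}\right] = (n-3)\, I_n + 2X.$$
Consequently, the hypothesis (\ref{idenecsdc}) can be rewritten as
$$MM^T \;=\; M^TM \;=\; (n-3)\, I_n + 2X.$$

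With this reformulation the conclusion is straightforward. From $MM^T = M^TM$ and associativity,
$$M\cdot(MM^T) \;=\; M(M^TM) \;=\; (MM^T)\cdot M,$$
so $M$ commutes with $MM^T$. Since $M$ trivially commutes with the scalar matrix $(n-3)I_n$, subtracting it off and dividing by $2$ shows that $M$ commutes with $X$, which is the claim.

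There is no real obstacle in this argument; the only non-routine step is noticing the identification $\left[\begin{array}{cc} L & 0 \\ 0 & L \end{array}\right] = (n-3)I_n + 2X$, after which the normality of $M$ built into (\ref{idenecsdc}) does all the work. The skew-symmetry of $M$ is used only indirectly, through its role in making the hypothesis (\ref{idenecsdc}) attainable; what the proof actually needs is normality and the specific form of $MM^T$.
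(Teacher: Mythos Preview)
Your proof is correct and follows essentially the same approach as the paper: both argue that $M$ commutes with $MM^T$ (from normality) and then peel off the scalar part to obtain commutation with $X$. The only cosmetic difference is that the paper introduces the block decomposition $M=\left[\begin{smallmatrix}B & C\\ D & E\end{smallmatrix}\right]$ and deduces that $J_{n/2}$ commutes with each block, whereas you work globally via the identification $\left[\begin{smallmatrix}L & 0\\ 0 & L\end{smallmatrix}\right]=(n-3)I_n+2X$; the underlying idea is identical.
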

\begin{proof}
Now let $$M=\left[\begin{array}{cc}
B & C \\
D & E
\end{array}
\right].$$
Since $M M^T=M^T M$, it follows that $M$ and $MM^T$ commute and so $L$ commutes with each of $B$, $C$, $D$ and $E$. But
$L=(n-3)I+2J$ and so $J$ too commutes with each of $B$, $C$, $D$ and $E$. Therefore, we conclude that
$XM=MX$ as it was desired.
\end{proof}



\begin{definition}
 A $(-1,1)$-matrix $K$ of size $n$ is said to be of skew-symmetric (or skew) type if $K+K^T=2I$.
\end{definition}

\begin{theorem}\label{maxdeterdosmundos}
Let $K$ be a $(-1,1)$-matrix of skew type.
$$\mbox{det}\,K=(2n-2)(n-2)^{\frac{1}{2}n-1}\Leftrightarrow \mbox{det}\,(K-I)=(2n-3)(n-3)^{\frac{1}{2}n-1}.$$
\end{theorem}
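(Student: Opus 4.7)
The plan is to exploit the identity $KK^T = K^TK = I + SS^T$, where $S:=K-I$. Indeed, since $K$ is of skew type we have $S^T=-S$, and hence $KK^T=(I+S)(I-S)=I-S^2=I+SS^T$, with the analogous computation for $K^TK$. Consequently the two positive semidefinite matrices $A:=SS^T$ and $B:=KK^T$ agree off the diagonal while $A_{ii}=n-1$ and $B_{ii}=n$. The biconditional will follow by transferring the equality case of Theorem \ref{theorembound} between $A$ and $B$.

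For the implication $(\Leftarrow)$ I would start from $\det(K-I)=(2n-3)(n-3)^{\frac{1}{2}n-1}$. By the equality case of Theorem \ref{theorembound}, there is a $\pm 1$ diagonal matrix $\Sigma=\mathrm{diag}(\Sigma_1,\Sigma_2)$ with $(\Sigma S\Sigma)(\Sigma S\Sigma)^T=\left[\begin{array}{cc}L & 0\\ 0 & L\end{array}\right]$ where $L=(n-3)I+2J$. Then set $K':=\Sigma K\Sigma$: since $\Sigma^2=I$, $K'$ is again of skew type, $\det K'=\det K$, and $K'-I=\Sigma S\Sigma$. Substituting into the identity above gives $K'(K')^T=\left[\begin{array}{cc}L' & 0\\ 0 & L'\end{array}\right]$ with $L':=L+I=(n-2)I+2J$. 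Because $J_{n/2}$ has eigenvalues $n/2$ (simple) and $0$ (with multiplicity $n/2-1$), one gets $\det L'=(2n-2)(n-2)^{\frac{1}{2}n-1}$, and $|\det K|^2=\det(K'(K')^T)=(\det L')^2$ yields the claimed value.

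For $(\Rightarrow)$ I would run the same mechanism in reverse, and this is where the main obstacle lies: one needs the equality characterization for the Ehlich--Wojtas bound on $|\det K|$, which is not stated in the excerpt. The key observation is that, because $B$ and $A$ share the same off-diagonal entries, the proof of Theorem \ref{theorembound} transfers almost verbatim. Using $B_{ii}=n\equiv 2\pmod 4$ one deduces, exactly as in (\ref{equatiomodular1}), that $B_{ij}+B_{jt}+B_{ti}\equiv 2\pmod 4$; the same partition-plus-Fischer argument, combined with Lemma \ref{detfdp} applied with diagonal $n$ in place of $n-1$ (i.e., with parameter $n+1$), then gives $|\det K|\leq(2n-2)(n-2)^{\frac{1}{2}n-1}$ with equality forcing, after a signed conjugation $K\mapsto\Sigma K\Sigma$, that $KK^T=\left[\begin{array}{cc}L' & 0\\ 0 & L'\end{array}\right]$. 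Subtracting $I$ from each diagonal block converts $L'$ into $L$, so $SS^T$ attains the extremal shape of Theorem \ref{theorembound} and that theorem's equality statement gives $\det(K-I)=(2n-3)(n-3)^{\frac{1}{2}n-1}$.
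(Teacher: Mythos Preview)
Your proof is correct and rests on the same key identity as the paper: with $S=K-I$ skew, $KK^T=(I+S)(I-S)=I+SS^T$, i.e.\ $A=A'+I$ in the paper's notation. The paper then chains two equivalences: by \cite[Theorem~1]{Coh89}, $\det K=(2n-2)(n-2)^{\frac{n}{2}-1}$ iff $KK^T$ is monomially conjugate to $\left[\begin{smallmatrix}L&0\\0&L\end{smallmatrix}\right]+I$; subtracting $I$ and invoking Theorem~\ref{theorembound} gives the equivalence with $\det(K-I)=(2n-3)(n-3)^{\frac{n}{2}-1}$.

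The only real difference is in the $(\Rightarrow)$ direction: the paper simply \emph{cites} Cohn's theorem for the Ehlich--Wojtas equality characterization of $KK^T$, whereas you, not having that reference available, re-derive it by rerunning the partition/Fischer/Lemma~\ref{detfdp} argument with diagonal $n$ instead of $n-1$. Your reconstruction is sound (the off-diagonals of $KK^T$ and $SS^T$ coincide, so the congruence $a_{ij}+a_{jt}+a_{ti}\equiv 2\pmod 4$ carries over unchanged, and Lemma~\ref{detfdp} with parameter $n+1$ does the rest), but it is more labor than the paper needs. Your $(\Leftarrow)$ argument, going through an explicit $\Sigma$-conjugation and a direct eigenvalue computation of $\det L'$, is a perfectly valid variant of the paper's one-line appeal to the block form.
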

\begin{proof}
 Let $A=KK^T$ and $A'=(K-I)(K-I)^T$.

Since $k_{i,j}=-k_{j,i}$ and $k_{i,i}=k_{j,j}=1$ then $a_{i,j}=\sum k_{i,l} k_{j,l}=\sum_{l\neq i,j} k_{i,l} k_{j,l}=a'_{i,j}$ and $a_{i,i}=1+a'_{i,i}$ so $$A=A'+I.$$

 On the other hand, it is well-known \cite[Theorem 1]{Coh89} that
 $$\mbox{det}\,K=(2n-2)(n-2)^{\frac{1}{2}n-1}\Leftrightarrow A=K K^T=P \left[\begin{array}{cc}
L & 0\\
0 & L
\end{array}\right] P +I$$ for some  $(-1,1,0)$-monomial matrix $P$.

Since $A-I=A'$, then
$$A'=P \left[\begin{array}{cc}
L & 0\\
0 & L
\end{array}\right] P$$
which is equivalent to
$$\mbox{det}\,(K-I)=(2n-3)(n-3)^{\frac{1}{2}n-1}.$$

\end{proof}

 \begin{definition}
 We say that $H$ is an $n$ by $n$ matrix of skew-EW type if  $H$ is a $(-1,1)$-matrix of skew type and $(H-I)(H-I)^T=\left[\begin{array}{cc}
L & 0\\
0 & L
\end{array}\right]$. (or equivalently, by Theorem  \ref{maxdeterdosmundos}, $H H^T= \left[\begin{array}{cc}
L & 0\\
0 & L
\end{array}\right] +I$).
\end{definition}

\noindent{\it Proof of Proposition \ref{nceb01}.}
 Let $M$ be an $n$ by $n$ skew matrix satisfying (\ref{idenecsdc}). Thus, by Theorem \ref{maxdeterdosmundos},
   $H=M+I$ is  matrix of skew-EW type. This means that $$HH^T=H^TH=\left[\begin{array}{cc}
L & 0\\
0 & L\end{array}\right]+I.$$
 Then, by \cite[Theorem 2]{Coh89},  $2n-2$ has a representation as the sum of two squares, $2n-2=x^2+y^2$ and
 $$H=\left[\begin{array}{cc}
 X & Y \\
 Z & W
 \end{array}\right]$$
in which every row sum and columns sum of each of $X$ and $W$ is $x$, each row sum and each column sum of $Y$ is $y$. Since $X$ is of skew type, we have that $x=1$.
Then $2n-3=y^2$.

\vspace{3mm}

Remark 1 claims that a matrix of skew-EW type cannot exist for orders 10, 18, 22, 30, 34, 38, 46, 50, 54, 58, 66, 70 ,74, 78, 82, 90.
The next lemma asserts  that  a necessary condition for the existence of $n\times n$ matrices of skew-EW type   is that $n-1$ must be a sum of two squares but no sufficient. No matrix of skew-EW type exists for $n=18$ although   $n-1=17=4^2+1^2$. Moreover, there is a  conference matrix.

\begin{lemma}
Let $n$ be  an integer and $n=2 \mod 4$.
If $2n-3=x^2$ for some integer $x$ then $n-1=y^2+z^2$ for some integers $y$ and $z$.

\end{lemma}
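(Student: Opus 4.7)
The plan is to do an elementary parity/substitution argument. Since $2n-3$ is odd, any integer $x$ with $x^2=2n-3$ must be odd, so I would write $x = 2k+1$ for some integer $k$.

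Substituting into $x^2 = 2n-3$ gives $4k^2+4k+1 = 2n-3$, hence
\[
n = 2k^2 + 2k + 2,
\]
and therefore
\[
n-1 = 2k^2 + 2k + 1 = k^2 + (k+1)^2.
\]
This exhibits $n-1$ as a sum of two squares with $y=k$, $z=k+1$, which is exactly what is required.

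As a sanity check consistent with the hypothesis $n \equiv 2 \pmod 4$, note that $n = 2(k(k+1)+1)$ and $k(k+1)$ is even, so $k(k+1)+1$ is odd and $n \equiv 2 \pmod 4$ automatically. There is no real obstacle here; the only point worth being careful about is the parity of $x$, which is forced by the odd right-hand side $2n-3$. The remainder is a one-line algebraic identity.
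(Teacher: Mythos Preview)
Your proof is correct and follows essentially the same route as the paper: both arguments observe that $x$ must be odd, write $x=2k+1$, and then verify the identity $n-1=2k^2+2k+1=k^2+(k+1)^2$. Your additional remark that the hypothesis $n\equiv 2\pmod 4$ is automatically forced by $2n-3=x^2$ is a nice observation not made explicit in the paper.
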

\begin{proof}
On one hand, we have
$$n=4k+2 \Rightarrow n-1=4k+1= \frac{8k+2}{2}.$$
On the other hand,
$2n-3=x^2$ substituting $n$, we have $8k+1=x^2$.
Hence
$$n-1=\frac{x^2+1}{2}.$$

Furthermore, since $8k+1$ is odd, $x$ is so. $$x=2t+1\Rightarrow x^2= 4t^2+4t+1=t^2+(t+1)^2.$$ Thus,
 $$ n-1=\frac{x^2+1}{2}=2t^2+2t+1=t^2+(t+1)^2.$$

\end{proof}

\subsection{Self-dual codes using $(-1,1)$-matrices of skew type}

We now make a sketch of a construction for  matrices of skew-EW type. Matrices of skew-EW type have been found using this technique.
This construction is made up of two steps. Firstly, it consists of looking for $(-1,1)$-matrices with determinant equal to  Ehlich-Wotjas' bound, in the case that $2n-3$ is a square. The search methods usually employed in the literature \cite{AAFG12,Orr05} for finding  these $(-1,1)$-matrices are based on
  decomposing  $\left[\begin{array}{cc} L' & 0 \\ 0 & L' \end{array}\right]$ where $L'=(n-2)I+2J$ as the product of a $(-1,1)$-matrix  and its transpose. To carry out this first step, we have used the cocyclic approach described in \cite[Algorithm 2 p. 869]{AAFG12}. Secondly, check if the matrices yielded by step 1 are equivalent to a skew matrix with $1$'s on the main diagonal. For this second step, we suggest the following procedure.

\vspace{2mm}

\noindent{ \bf Algorithm.}  Search for  matrices skew type equivalent to $M$.

\vspace{2mm}

\noindent{Input: a $(-1,1)$-matrix $M$ of order $n$.}
\newline
\noindent{Output: a   matrix $K$ of skew type equivalent to $M$, in the case that such matrix $K$   exists. }

\vspace{3mm}

\noindent{$\Omega\leftarrow \emptyset$}


\noindent{${S}\leftarrow$  The symmetric group $S_n$  (that is, the group of all $n!$ permutations of $N=\{1,\ldots,n\}$).


\noindent{while ${S}$ is not empty  $\,\{$ }
  \hspace*{0.5cm}\begin{enumerate}
  \item[] 1. Choose $\sigma$ in ${S}.$

  2. $\,{{S}}\leftarrow{{S}}\setminus\{\sigma\}$.

  3. if $m_{i,\sigma(i)}=1$, $k_{\sigma(i),j}=m_{i,j}$; otherwise $k_{\sigma(i),j}=-m_{i,j}$, where $\,1\leq i,j\leq n$. Let $K=[k_{\sigma(i),j}]$.

   4. Check whether $K$ is  matrix of skew type.
                      If not, go to 1; otherwise $\Omega\leftarrow K$.

   5. End while.

           \end{enumerate}
                   $\}$

\vspace{2mm}

\noindent{$\Omega$ }

\vspace{3mm}
\newpage

\noindent{\sc Verification:}

If $K$ and $M$ are equivalent then there exist $(-1,1,0)$-monomial matrices $P$ and $Q$ such that $PMQ^T=K$. If $K$ is of skew type then
$Q^T K Q$ is so. Let us observe that $ Q^T K Q= Q^TP M$ and $Q^TP$ is a $(-1,1,0)$-monomial matrix.

\vspace{3mm}

In the following, a technique for constructing self-dual codes using matrices of skew-EW type is described.

\begin{lemma}\label{lemmaskew}
Let $H$ be an $n\times n$ matrix of  skew-EW type  and  $X=\left[\begin{array}{cc}J_{n/2} & 0\\
0 & J_{n/2}\end{array}\right]$. Then the following identity holds
$$X(H-I)^T+(H-I)X=0.$$
\end{lemma}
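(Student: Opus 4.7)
The plan is to reduce the identity to the commutation relation already supplied by Lemma~\ref{lemmaconmuta}. The key observation is that although $H$ itself is not skew-symmetric, the shifted matrix $H-I$ is: since $H$ is of skew type, $H+H^T=2I$, which rearranges to $(H-I)^T=-(H-I)$.

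Using this, the statement to be proved simplifies dramatically. Indeed,
\[
X(H-I)^T+(H-I)X \;=\; -X(H-I)+(H-I)X,
\]
so the identity is equivalent to the commutation relation $(H-I)X=X(H-I)$.

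The next step is to verify that $M:=H-I$ satisfies the hypotheses of Lemma~\ref{lemmaconmuta}. By the skew-type property just noted, $M$ is a skew matrix with $0$'s on the diagonal and $\pm 1$'s elsewhere. By the definition of skew-EW type we have $(H-I)(H-I)^T=\operatorname{diag}(L,L)$, i.e.\ $MM^T=\operatorname{diag}(L,L)$. The other half of condition~(\ref{idenecsdc}), namely $M^TM=MM^T$, is automatic once $M$ is skew: $M^TM=(-M)M=M(-M)=MM^T$. Hence $M$ satisfies~(\ref{idenecsdc}), and Lemma~\ref{lemmaconmuta} delivers $MX=XM$, which is precisely the commutation relation we need.

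There is no real obstacle in this argument; the only subtle point is to recognize that the identity in the statement is not an independent orthogonality relation but simply a skew-symmetric repackaging of the commutation $MX=XM$ already established for matrices meeting the Ehlich--Wojtas bound. Once that reformulation is made, everything is a one-line substitution.
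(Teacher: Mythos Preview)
Your argument is correct and is exactly the paper's approach: reduce $X(H-I)^T+(H-I)X=0$ to the commutation $X(H-I)=(H-I)X$ via the skew-symmetry $(H-I)^T=-(H-I)$, and then invoke Lemma~\ref{lemmaconmuta}. The paper's proof is a single line citing that commutation, while you additionally spell out why $M=H-I$ meets the hypotheses of Lemma~\ref{lemmaconmuta}; that extra verification is welcome.
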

\begin{proof}
It follows from $X(H-I)=(H-I)X$ (see Lemma \ref{lemmaconmuta}).
\end{proof}

\begin{proposition}\label{mgeneradora}
Let $H$ be a skew-EW matrix of order $n$ and suppose that there exist three elements $\alpha,\,\beta$ and $\gamma$ nonzero elements of $GF(p)$ such that
\begin{eqnarray*}\label{equationcodes}
\alpha^2 + \frac{n}{2} \beta^2 + (n-1) \gamma^2 & = &0 \,\mod p,\\
\frac{n}{2} \beta^2 + 2 \gamma^2 & = & 0 \,\mod p.
\end{eqnarray*}
Then the matrix
$$G=[\alpha I,\,\beta X+ \gamma (H-I)]$$
generates a self-dual code $C_\star(n)$ of length $2n$ and dimension $n$ over $GF(p)$.

\end{proposition}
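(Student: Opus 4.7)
The plan is to verify the standard self-duality criterion: since $G=[\alpha I,\,\beta X+\gamma(H-I)]$ is an $n\times 2n$ matrix whose first block $\alpha I$ is invertible over $GF(p)$ (as $\alpha\neq 0$), the row space $C_\star(n)$ automatically has dimension $n$. Because $\dim C_\star(n)=\tfrac{1}{2}(2n)$, showing self-orthogonality $GG^T\equiv 0\pmod p$ will immediately upgrade to self-duality. So the whole proof reduces to computing $GG^T$ and matching it against the two modular hypotheses.

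First I would expand
$$GG^T=\alpha^2 I_n+(\beta X+\gamma(H-I))(\beta X+\gamma(H-I))^T.$$
Using that $X$ is symmetric and that $H-I$ is skew (so $(H-I)^T=-(H-I)$), the cross terms simplify to
$$\beta\gamma\bigl(X(H-I)^T+(H-I)X^T\bigr)=\beta\gamma\bigl(X(H-I)^T+(H-I)X\bigr),$$
which vanishes identically by Lemma~\ref{lemmaskew}. Thus only the pure square terms survive, and I am left with
$$GG^T=\alpha^2 I_n+\beta^2 X^2+\gamma^2(H-I)(H-I)^T.$$

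Next I would evaluate the two remaining summands. A direct block computation gives $X^2=\tfrac{n}{2}X$ because $J_{n/2}^2=\tfrac{n}{2}J_{n/2}$. By the defining property of a skew-EW matrix, $(H-I)(H-I)^T=\mathrm{diag}(L,L)$ with $L=(n-3)I_{n/2}+2J_{n/2}$, which reorganizes globally as $(n-3)I_n+2X$. Substituting back,
$$GG^T=\bigl(\alpha^2+(n-3)\gamma^2\bigr)I_n+\Bigl(\tfrac{n}{2}\beta^2+2\gamma^2\Bigr)X.$$
Since the diagonal entries of $X$ are all equal to $1$ (every block $J_{n/2}$ has $1$'s on its diagonal), the diagonal entries of $GG^T$ are $\alpha^2+\tfrac{n}{2}\beta^2+(n-1)\gamma^2$, while the off-diagonal entries are either $\tfrac{n}{2}\beta^2+2\gamma^2$ (inside the two $n/2\times n/2$ diagonal blocks of $X$) or $0$ (in the off-diagonal blocks).

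Finally, requiring $GG^T\equiv 0\pmod p$ is equivalent to the two stated congruences
$$\alpha^2+\tfrac{n}{2}\beta^2+(n-1)\gamma^2\equiv 0,\qquad \tfrac{n}{2}\beta^2+2\gamma^2\equiv 0 \pmod p,$$
which completes the proof. No step looks difficult; the only point that needs a little care is the vanishing of the cross terms, which is precisely where the skew-symmetry of $H-I$ together with Lemma~\ref{lemmaskew} (and hence the block structure forced by the Ehlich--Wojtas bound) gets used in an essential way.
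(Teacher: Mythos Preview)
Your argument is correct and follows exactly the route the paper indicates: compute $GG^T$, use Lemma~\ref{lemmaskew} to kill the cross terms, and then read off the two congruences from the $I_n$- and $X$-components. The paper's own proof is just the one-line remark that this follows ``by direct inspection'' from $GG^T\equiv 0$ and Lemma~\ref{lemmaskew}, so you have simply written out those details.
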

\begin{proof}
The statement of this proposition follows by direct inspection from the fact that $GG^T=0\,\mod p$ and lemma \ref{lemmaskew}.
\end{proof}

An upper bound on the minimum weight of any of these codes is given in the following proposition.
\begin{proposition}\label{bounddistancest}
Let $H$ be a skew-EW matrix of order $n$. Let $C_\star(n)$ be a self-dual code over $GF(p)$ constructed from $H$ using Proposition \ref{mgeneradora}. Then
$$d(C_\star(n))\leq \left\{\begin{array}{ll}
3, & n=2\\
\frac{n}{2}+2, & n\geq 6.
\end{array}
\right.$$
\end{proposition}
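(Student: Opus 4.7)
The plan is to exhibit an explicit low-weight codeword of $C_\star(n)$ and read off its weight. For $n\geq 6$, the hypothesis $n\equiv 2\pmod 4$ forces $n/2\geq 3$, so I would fix two distinct indices $i,j\in\{1,\ldots,n/2\}$ and examine the codeword $c=(e_i-e_j)G$. Its left half $\alpha(e_i-e_j)$ has weight $2$. The right half $\beta(e_i-e_j)X+\gamma(e_i-e_j)(H-I)$ simplifies substantially, because rows $i$ and $j$ of $X$ coincide (both equal the indicator of the first $n/2$ coordinates), so $(e_i-e_j)X=0$ and only $\gamma(e_i-e_j)(H-I)$ survives.

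Counting the nonzero entries of $(e_i-e_j)(H-I)$ is the heart of the proof. At positions $k=i$ and $k=j$ the entries are $-H_{j,i}$ and $H_{i,j}$, both nonzero; at $k\notin\{i,j\}$ the entry is $H_{i,k}-H_{j,k}\in\{0,\pm 2\}$, nonzero precisely when $H_{i,k}\ne H_{j,k}$. To count these sign disagreements I would use the skew-EW identity $HH^T=\left[\begin{smallmatrix}L&0\\0&L\end{smallmatrix}\right]+I$, which gives $(HH^T)_{i,j}=2$. Expanding this sum and using $H_{i,i}=H_{j,j}=1$ together with the skew-type relation $H_{i,j}=-H_{j,i}$ cancels the two boundary terms, reducing the identity to
\[
\sum_{k\ne i,j} H_{i,k}H_{j,k}=2.
\]
Since the $n-2$ summands take values $\pm 1$, exactly $n/2-2$ of them are $-1$, i.e.\ exactly $n/2-2$ positions $k\notin\{i,j\}$ contribute a nonzero $\pm 2\gamma$. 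Adding the two boundary contributions, $(e_i-e_j)(H-I)$ has weight $2+(n/2-2)=n/2$, so $c$ has total weight $2+n/2=n/2+2$, as required.

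For $n=2$ the construction above is unavailable since $\{1\}$ has no two distinct indices, but here $X=I_2$ and the first row of $G$ reads $(\alpha,0,\beta,\gamma H_{1,2})$, a codeword of weight $3$. I do not foresee any real obstacle: the only substantive step is translating the skew-EW identity $(HH^T)_{i,j}=2$ into the combinatorial statement that exactly $n/2-2$ of the off-diagonal positions disagree in sign; once that count is in hand, the weight bound is a direct tally.
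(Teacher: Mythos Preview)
Your argument is correct and follows the same approach as the paper: exhibit the codeword obtained as the difference of two rows of $G$ with indices $i,j$ in the same block, observe that the $\beta X$ contribution cancels, and count the nonzero entries of the resulting row of $H-I$ using the skew-EW identity. The paper's proof is terser---it directly invokes $r_i\cdot r_j=((H-I)(H-I)^T)_{i,j}=2$ rather than passing through $(HH^T)_{i,j}=2$ and cancelling the boundary terms via the skew-type relation---but the two computations are equivalent, and your version fills in the weight count that the paper leaves implicit.
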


\begin{proof}
 Taking  two rows of $H-I$. Denoted by $r_i$ and $r_j$. Both of them chose either from the 1st to $\frac{n}{2}$-th of the rows of $H-I$ or  from $\frac{n}{2}+1$-th to $n$-th. For being $H$ a matrix of skew-EW type then $r_i\cdot r_j=2$. Therefore $wt(r_i-r_j)=\frac{n}{2}$.
\end{proof}

Note that the bound for (symmetric) conference matrices is \cite{AG01}

$$d(C(n,n-1))\leq \left\{\begin{array}{ll}
3, & n=2\\
\frac{n}{2}+3, & n\geq 6.
\end{array}
\right.$$

We remember that the gaps in  Table 1 have motivated our investigation of self-dual codes constructed from  matrices of skew-EW type. These are given in the remainder of this section.

\begin{center}\begin{table}
\caption{Self-dual codes from $(-1,1)$-matrices of skew type of orders $n=6,14$ and $26$.}
$$\begin{array}{c|c|c|c|c} \hline  & N=12 &  N=28  & N=52 \\
\hline \begin{array}{c} p \\
\hline 3 \\ 5 \\ 7 \\ 23
\end{array}
 &
 \begin{array}{cccc}\quad \alpha & \quad \beta \quad & d & \quad d_b\quad \\
\hline   &  &  & \\
 &  &  & \\
2 &   2 & 5 & 6\\
 &  &  &
\end{array}
&
\begin{array}{cccc}\quad \alpha & \quad \beta \quad & d & \quad d_b\quad \\
\hline  1 & 1 & 9 & 9 \\
 2& 2 & 8 & \quad 10 \,- \,12\\
 &    &  & \\
9 & 7 & 9 &
\end{array}
&
\begin{array}{cccc}\quad \alpha & \quad \beta \quad & d & \quad d_b\quad \\
\hline  1 & 1 & 12 & \quad 15 \\
 &  &  & \\
 &    &  & \\
 &  &  &
\end{array}\\
 \hline \end{array}$$\end{table}\end{center}


\noindent{ \hspace{-0.3 cm} A. $\,\,[12,6]$ Codes}

\vspace{1.2mm}

There is only one matrix of skew-EW type (up to equivalence) with $n=6$, and this has a form given by

$$\left[\begin{array}{cccccc}
1 & 1 & - & 1 & 1 & 1 \\
- & 1 & 1 & 1 & 1 & 1 \\
1 & - & 1 & 1 & 1 & 1 \\
- & - & - & 1 & - & 1 \\
- & - & - & 1 & 1 & - \\
- & - & - & - & 1 & 1
\end{array}\right].$$
From the bound in Proposition \ref{bounddistancest}, $d(C_\star(6))\leq 5$. This bound is met by $C_\star(6)$ for $p=7$ if $\alpha=\beta=2$ and $\gamma=1$. So we got a self-dual code near to optimal ($d_b=6$), but a code with $d=6$ exists \cite{AG01}. We have used Magma, a computer algebra system \cite{BC02} for symbolic computation developed at the University of Sydney, to compute the distance of the self-dual codes. Actually, we computed the distance of the derived self-dual codes for each possible solution of the diophantine system of Proposition \ref{mgeneradora}. In this case, there are no new solutions. So, we cannot fill in any gaps in the rows of Table 1. 

\vspace{2mm}

\newpage

\noindent{\hspace{-0.3 cm} B. $\,\,[28,14]$ Codes}

\vspace{1.2mm}

There is only one matrix of skew-EW type with $n=14$, and this has a form given by
$$\left[\begin{array}{cccccccccccccc}
1 & 1& -& -& 1& -& 1& 1& -& -& -& -& -& -\\
-& 1& 1& 1&
-& -& 1& -& -& 1& -& -& -& -\\
1& -& 1& 1& 1& -& -& -&
-& -& 1& -& -& -\\
1& -& -& 1& -& 1& 1& -& -& -& -& -& \
-& 1\\
-& 1& -& 1& 1& 1& -& -& -& -& -& 1& -& -\\
1& 1&
1& -& -& 1& -& -& 1& -& -& -& -& -\\
 -& -& 1& -& 1& 1&
1& -& -& -& -& -& 1& -\\
 -& 1& 1& 1& 1& 1& 1& 1& 1& -& 1&
-& -& 1\\
 1& 1& 1& 1& 1& -& 1& -& 1& -& -& 1& 1& 1\\
 1& -&
1& 1& 1& 1& 1& 1& 1& 1& -& 1& -& -\\
 1& 1& -& 1& 1& 1& 1& -& 1&
1& 1& -& 1& -\\
1& 1& 1& 1& -& 1& 1& 1& -& -& 1& 1& 1& -\\
1&
1& 1& 1& 1& 1& -& 1& -& 1& -& -& 1& 1\\
1& 1& 1& -& 1& 1& 1&
-& -& 1& 1& 1& -& 1
\end{array}\right].$$

The bound for $C_\star(14)$ is $d\leq 9$. This is met by self-dual codes $C_\star(14)$ with $p=3$ and  $23$. For $p=3$ , the code $C_\star(14)$ is optimal (its distance is the highest possible). For $p=5$ there are two inequivalent codes, both with distance $d=8$. The first with parameters $\alpha=\beta=2$ and $\gamma=1$ and the second with parameters $\alpha=\beta=1$ and $\gamma=2$.

A solution with $\gamma=1$ of the diophantine system for each $p\leq 23$ for which a self-dual code exists is given in Table 2. Let us point out that for $n=14$ all these solutions are new in the sense, that they have filled   gaps in the rows of Table 1.

\vspace{2mm}

\noindent{\hspace{-0.3 cm} C. $\,\,[52,26]$ Codes}

\vspace{1.2mm}

For what we know, there is only one known  matrix of skew-EW type with $n=26$, and it is posted in {\em http://personal.us.es/armario/articulos/c26.pdf}. The bound for $C_\star(26)$ is $d\leq 15$. For $p=3$,  $\alpha=\beta=\gamma=1$ gives $d(C_\star(26))=12$.

Length 40 was the limit of the search for minimum distance gave in \cite{AG01}.

\section{Conclusion}
In this paper we present a method for constructing self-dual codes. The conditions required, for the generated codes to be self-dual, are proved. Our technique presents a strong connection with the  methods given in \cite[Proposition 2.1 and 2.2]{AG01} for conference matrices. Moreover, for $N=28$ the results provided by our method and  Proposition \ref{sdcsw} complement   each other (see table 1 and 2).

For $n=86$ no conference matrix is  known. But $2*86-3=13^2$  and so a matrix of skew-EW type could exist. If a matrix of skew-EW type for $n=86$ were found then we would have self-dual $[172,86]$-codes for the first time using these techniques. This problem is left open and it is our next goal.


 \subsection{Acknowledgements}
This work has been partially supported by the research projects FQM-016 and P07-FQM-02980 from JJAA and MTM2008-06578 from MIC\-INN (Spain) and FEDER (European Union).
 We thank  Kristeen Cheng for her reading of the manuscript.


\end{document}